\date{}
\newcommand{\Del}{\nabla}
\renewcommand{\H}{\mathcal{H}}
\renewcommand{\tilde}{\widetilde}
\newcommand{\herm}{\operatorname{Herm}}
\newcommand{\pos}{\operatorname{Pos}}
\newcommand{\psd}{\operatorname{PSD}}
\renewcommand{\d}{\dd\hspace{-0.04em}}
\begin{document}

\title{Approximate Petz recovery from the geometry of density operators\thanks{The authors are supported by AFOSR award FA9550-19-1-0369, CIFAR, DOE award DE-SC0019380 and the Simons Foundation.}
}

\author{Sam Cree         \and
        Jonathan Sorce
}

\institute{Sam Cree and Jonathan Sorce \at
              Stanford Institute for Theoretical Physics, 382 Via Pueblo, Stanford, CA 94305
           \and
           Sam Cree \at
              \email{scree@stanford.edu}
           \and
           Jonathan Sorce \at
				\email{jsorce@stanford.edu}
}


\maketitle

\begin{abstract}
	We derive a new bound on the effectiveness of the Petz map as a universal recovery channel in approximate quantum error correction using the second sandwiched R\'{e}nyi relative entropy $\tilde{D}_{2}$. For large Hilbert spaces, our bound implies that the Petz map performs quantum error correction with order-$\epsilon$ accuracy whenever the data processing inequality for $\tilde{D}_{2}$ is saturated up to terms of order $\epsilon^2$ times the inverse Hilbert space dimension. Conceptually, our result is obtained by extending \cite{cree2020geometric}, in which we studied exact saturation of the data processing inequality using differential geometry, to the case of approximate saturation. Important roles are played by (i) the fact that the exponential of the second sandwiched R\'{e}nyi relative entropy is quadratic in its first argument, and (ii) the observation that the second sandwiched R\'{e}nyi relative entropy satisfies the data processing inequality even when its first argument is a non-positive Hermitian operator.
\end{abstract}

\section{Introduction}
\label{sec:intro}

In \cite{petz1986sufficient, petz1988sufficiency}, Petz proved a very general property of von Neumann algebras whose analogue for finite-dimensional quantum systems can be stated as follows. Let $\H$ and $\H'$ be finite-dimensional Hilbert spaces of dimension $d$ and $d'$. A \emph{density operator} on $\H$ is a unit-trace, positive semidefinite operator on $\H$. Let $\sigma$ be a strictly positive density operator on $\H$, and let $\rho$ be any density operator on $\H$. Define the relative entropy of $\rho$ with respect to $\sigma$ by
\begin{equation}
	D(\rho \| \sigma) \equiv \tr(\rho \log \rho) - \tr(\rho \log \sigma).
\end{equation}
Let $\Lambda$ be a quantum channel from $\H$ to $\H'$, i.e., a completely positive, trace-preserving (CPTP) linear map from the operator space $\mathcal{L}(\H)$ to the operator space $\mathcal{L}(\H').$ Petz's theorem states that if the relative entropy is unchanged under application of the quantum channel $\Lambda$, i.e.,
\begin{equation}
	D(\rho \| \sigma)
		= D(\Lambda(\rho) \| \Lambda(\sigma)),
\end{equation}
then there exists a quantum channel $\mathcal{R}$ from $\H'$ to $\H$ that recovers the states $\rho$ and $\sigma$, i.e.,
\begin{equation} \label{eq:Petz-recovery}
	(\mathcal{R} \circ \Lambda)(\rho) = \rho
	\qquad \text{and}
	\qquad 
	(\mathcal{R} \circ \Lambda)(\sigma) = \sigma.
\end{equation}
The channel $\mathcal{R}$, which depends on the state $\sigma$ and the channel $\Lambda$, can be written explicitly as the \emph{Petz map}
\begin{equation} \label{eq:petz-map}
	\mathcal{R}_{\sigma, \Lambda}(M)
		= \sigma^{1/2} \Lambda^* \left[ \Lambda(\sigma)^{-1/2} \cdot M \cdot \Lambda(\sigma)^{-1/2} \right] \sigma^{1/2}.
\end{equation}
The map $\Lambda^*$ appearing in this expression is the adjoint of the quantum channel $\Lambda$, defined with respect to the Hilbert-Schmidt inner product on operators $\langle A, B \rangle = \tr(A^{\dagger} B).$

In some situations, Petz's theorem can be used to prove that a Hilbert space $\H$, often thought of as a ``logical'' subspace of a larger ``physical'' Hilbert space $\H'$, forms a quantum error correcting code for a channel $\Lambda$. If one can prove that for all density operators $\rho$ on $\H$, the relative entropy of $\rho$ with respect to the maximally mixed state $I/d$ is invariant under $\Lambda$, then equation \eqref{eq:Petz-recovery} implies that the Petz map $\mathcal{R}_{I/d, \Lambda}$ is a CPTP inverse for $\Lambda$ on all of $\H$. The existence of a CPTP map inverting $\Lambda$ on all of $\H$ is exactly the condition for $\H$ to be a quantum error correcting code for $\Lambda.$

Recent years have seen renewed interest in the theory of \emph{approximate} quantum error correction. In practical applications, it may be too much to ask that a Hilbert space $\H$ admits a perfect CPTP inverse for $\Lambda$ --- it is often good enough to have a quantum channel $\mathcal{R}$ for which $\mathcal{R}\circ \Lambda$ is approximately the identity on $\mathcal{L}(\H)$ in some appropriate sense. In \cite{junge2018universal}, it was established that a generalization of the Petz map called the \emph{twirled} or \emph{rotated} Petz map $\mathcal{T}_{\sigma, \Lambda}$ and defined by
\begin{equation}
	\mathcal{T}_{\sigma, \Lambda}(M)
		= \frac{\pi}{2} \int \d t\, \frac{1}{1 + \cosh{\pi t}} \sigma^{- i t/2} \mathcal{R}_{\sigma, \Lambda}(\Lambda(\sigma)^{i t/2} M \Lambda(\sigma)^{- i t/2}) \sigma^{i t/2},
\end{equation}
satisfies the equations
\begin{equation}
	(\mathcal{T}_{\sigma, \Lambda}\circ \Lambda)(\sigma)
		= \sigma
\end{equation}
and
\begin{equation} \label{eq:twirled-Petz-inequality}
	\lVert \rho - (\mathcal{T}_{\sigma, \Lambda} \circ \Lambda)(\rho) \rVert_1
		\leq 2 \sqrt{1 - \exp(- \Delta D)}.
\end{equation}
In these expressions, the symbol $\Delta D$ denotes the relative entropy difference 
\begin{equation}
	\Delta D
		\equiv D(\rho \| \sigma) - D(\Lambda(\rho) \| \Lambda(\sigma)),
\end{equation}
and the symbol $\lVert \cdot \rVert_1$ denotes the one-norm
\begin{equation}
	\lVert A \rVert_1 \equiv \tr(\sqrt{A^{\dagger} A}).
\end{equation}
If the relative entropy of $\rho$ with respect to $\sigma$ is \emph{approximately} constant under the application of $\Lambda$, then a Taylor series expansion gives 
\begin{equation}
	\lVert \rho - (\mathcal{T}_{\sigma, \Lambda} \circ \Lambda)(\rho) \rVert_1
		\leq 2 \sqrt{\Delta D} + O((\Delta D)^{3/2}).
\end{equation}
So if all the relative entropies in a Hilbert space $\H$ with respect to some fixed reference state $\sigma$ --- usually chosen to be the maximally mixed state $I/d$ --- are approximately preserved under $\Lambda$, then the corresponding rotated Petz map is an approximate inverse for $\Lambda$ for all states on $\H$.

It was then shown in \cite{chen2020entanglement}, building on earlier work in \cite{barnum2002reversing}, that the Petz map for the maximally mixed state is not too much worse than any twirled Petz map for sufficiently small Hilbert spaces. Specifically, the authors showed that if there exists \emph{any} recovery map $\mathcal{R}$ satisfying
\begin{equation}
	\lVert \rho - (\mathcal{R} \circ \Lambda)(\rho) \rVert_1
		< \delta
\end{equation}
for the density operator $\rho$, then the Petz map $\mathcal{R}_{I/d, \Lambda}$ satisfies
\begin{equation}
	\lVert \rho - (\mathcal{R}_{I/d, \Lambda} \circ \Lambda)(\rho) \rVert_1
		< d \sqrt{8 \delta}.
\end{equation}
Comparing this with the twirled Petz map inequality \eqref{eq:twirled-Petz-inequality}, we learn that the Petz recovery map with maximally mixed reference state satisfies
\begin{equation} \label{eq:petz-rel-ent-bound}
	\lVert \rho - (\mathcal{R}_{I/d, \Lambda} \circ \Lambda)(\rho) \rVert_1
	< 4 d \left[1 - \exp(- \Delta D_{I/d}) \right]^{1/4},
\end{equation}
with 
\begin{equation}
	\Delta D_{I/d} = D(\rho \| I/d) - D(\Lambda(\rho) \| \Lambda(I/d)).
\end{equation}
This inequality contains an interesting balance between the Hilbert space dimension $d$ and the quantity $\Delta D_{I/d}$. For the Petz map with maximally mixed reference state to be a good recovery map, $d \cdot (\Delta D_{I/d})^{1/4}$ must be small, which requires the approximate preservation of relative entropy to scale as $d^{-4}$ for large Hilbert spaces.

The purpose of this paper is to report a new bound on the quality of the Petz map for approximate quantum error correcting codes, in terms of a generalization of the relative entropy called the \emph{second sandwiched R\'{e}nyi relative entropy}. This quantity, which we will denote by the symbol $\tilde{D}_{2}$, is defined by
\begin{equation}
	\tilde{D}_{2}(\rho \| \sigma)
		\equiv \log\tr[(\sigma^{-1/4} \rho \sigma^{-1/4})^{2}].
\end{equation}
It is a member of a continuous family of sandwiched R\'{e}nyi relative entropies \cite{muller2013quantum,wilde2014strong} that has been studied extensively in the quantum information literature (see, e.g., \cite{frank2013monotonicity, beigi2013sandwiched, leditzky2017data, wang2020alpha}). $\tilde{D}_{2}$ satisfies many of the important properties of the ordinary relative entropy, in particular the monotonicity inequality (or ``data processing inequality'')
\begin{equation} \label{eq:monotonicity}
	\tilde{D}_{2}(\rho \| \sigma) \geq \tilde{D}_{2}(\Lambda(\rho) \| \Lambda(\sigma))
\end{equation}
and positivity $\tilde{D}_{2}(\rho \| \sigma) \geq 0.$

The most general bound we prove in this paper, for arbitrary strictly positive $\sigma$ and arbitrary \emph{Hermitian} $\rho$,\footnote{It may be surprising that this bound holds for all Hermitian $\rho$, and does not require $\rho$ to be a density operator. We discuss this feature of the calculation in detail in section \ref{sec:conceptual-section}.} is 
\begin{equation} \label{eq:prelim-bound}
	\lVert \rho - (\mathcal{R}_{\sigma, \Lambda} \circ \Lambda)(\rho) \rVert_1
		\leq \lVert \sigma \rVert_2 \times \sqrt{\lVert \sigma^{-1} \rVert_\infty}
				\times \sqrt{e^{\tilde{D_2}(\rho \| \sigma)} - e^{\tilde{D_2}(\Lambda(\rho) \| \Lambda(\sigma))}},
\end{equation}
where $\lVert \cdot \rVert_p$ denotes the Schatten $p$-norm $\lVert A \rVert_p = \tr(|A|^p)^{1/p}.$ For $\sigma$ the maximally mixed state $I/d,$ this simplifies to
\begin{equation} \label{eq:Q-bound}
	\lVert \rho - (\mathcal{R}_{I/d, \Lambda} \circ \Lambda)(\rho) \rVert_1
		\leq \sqrt{e^{\tilde{D_2}(\rho \| I/d)} - e^{\tilde{D_2}(\Lambda(\rho) \| \Lambda(I/d))}}.
\end{equation}
Using a technique described in section \ref{sec:approximate-section}, we show that this bound implies our main result,
\begin{equation} \label{eq:main-bound}
	\lVert \rho - (\mathcal{R}_{I/d, \Lambda} \circ \Lambda)(\rho) \rVert_1
		\leq d^{1/2} \left[ 1 - \exp(- \Delta \tilde{D}_{2, I/d}) \right]^{1/2},
\end{equation}
with
\begin{equation}
	\Delta \tilde{D}_{2, I/d} = \tilde{D}_2(\rho \| I/d) - \tilde{D}_2(\Lambda(\rho) \| \Lambda(I/d)).
\end{equation}
Naively, the scaling in this bound appears to be better than the scaling in \eqref{eq:petz-rel-ent-bound}. For that bound to be good at large $d$, it was necessary for the change in relative entropy under $\Lambda$ to scale as $d^{-4}.$ In \eqref{eq:main-bound}, it is only necessary for the change in the second sandwiched R\'{e}nyi relative entropy to scale as $d^{-1}.$ It is hard to say for sure whether \eqref{eq:main-bound} is truly better than \eqref{eq:petz-rel-ent-bound}, though, because we are unaware of any general bounds relating the magnitudes of $\Delta \tilde{D}_{2, I/d}$ and $\Delta D_{I/d}$. One could imagine, for example, a situation where both quantities are small compared to one, but $\Delta \tilde{D}_{2, I/d}/\sqrt{\Delta D_{I/d}}$ is of order $d$; in this case, the two inequalities are of the same order in the large-$d$ limit. However, upon generating $1,000$ random states $\rho$ and channels $\Lambda$ in each dimension $d=2$ through $d=15$ according to the specifications of our appendix, and computing the maximum of $\Delta \tilde{D}_{2, I/d}/\sqrt{\Delta D_{I/d}}$ within our sample set, we found values that displayed no linear scaling --- e.g. $\sim 1.14$ for $d=5,$ $\sim 1.13$ for $d=10,$ and $\sim 1.12$ for $d=15.$ So even if there do exist special classes of states and channels for which the ratio scales linearly in $d$, such states do not appear to be generic. Furthermore, even if there are cases with $\Delta \tilde{D}_{2, I/d} \sim d \sqrt{\Delta D_{I/d}}$, this isn't so important for practical applications --- so long as one is working in a situation where $\Delta \tilde{D}_{2,I/d}$ is provably small, equation \eqref{eq:main-bound} has practical utility, and it is often easier to compute $\tilde{D}_{2}$ than $D.$

The essential machinery of the proof of inequality \eqref{eq:main-bound} originated in our previous paper \cite{cree2020geometric}. One begins by defining the trace functional $\tilde{Q}_{2}$ by
\begin{equation} \label{eq:Q-2}
	\tilde{Q}_{2}(\rho \| \sigma) = e^{\tilde{D}_{2}(\rho \| \sigma)} = \tr\left[ (\sigma^{-1/4} \rho \sigma^{-1/4})^2 \right],
\end{equation}
and then thinks of the function\footnote{While the operator $\Lambda(\sigma)^{-1/2}$ is not necessarily well defined, as $\Lambda(\sigma)$ need not be invertible even though $\sigma$ is, the fact that the support of $\rho$ is contained within the support of $\sigma$ means that the support of $\Lambda(\rho)$ is contained in the support of $\Lambda(\sigma)$. (See e.g. Lemma 3.1 of \cite{hiai_quantum_2011}.) This implies that $\tilde{Q}_{2}(\Lambda(\rho) \| \Lambda(\sigma))$ makes sense so long as we define $\Lambda(\sigma)^{-1/2}$ to be zero outside the support of $\Lambda(\sigma)$.}
\begin{equation} \label{eq:f-function}
	f_{\sigma, \Lambda}(\rho)
		= \tilde{Q}_{2}(\rho \| \sigma) - \tilde{Q}_{2}(\Lambda(\rho) \| \Lambda(\sigma))
\end{equation}
as a differentiable function on the Riemannian manifold of density operators. (We will also occasionally refer to this quantity as $\Delta \tilde{Q}_{2}$ or $\Delta \tilde{Q}_{2, \sigma}$.) While the function itself is scalar-valued, its gradient is a tangent vector on the manifold of density operators --- i.e., its gradient is an operator. Inequality \eqref{eq:monotonicity} together with monotonicity of the exponential function implies that $f_{\sigma, \Lambda}$ is nonnegative, so if it equals zero at a point of its domain, then that point is a minimum of the function and so the gradient must vanish. While this same calculation can be performed for any differentiable function of density operators satisfying a monotonicity inequality like \eqref{eq:monotonicity}, $\tilde{Q}_{2}$ has two special properties that make it especially well suited to this sort of analysis:
\begin{enumerate}[(i)]
	\item  The function $f_{\sigma, \Lambda}(\rho)$ extends smoothly to the space of all Hermitian operators on $\H$, and is still nonnegative on this domain. This allows the analysis to be applied without caveats even on the boundary of the space of density operators, i.e., even when $\rho$ has one or more vanishing eigenvalues.
	\item The gradient of $f_{\sigma, \Lambda}$ can be expressed in terms of the Petz recovery operator $\rho - (\mathcal{R}_{\sigma, \Lambda} \circ \Lambda)(\rho)$, so that vanishing of $f_{\sigma, \Lambda}$ implies $(\mathcal{R}_{\sigma, \Lambda} \circ \Lambda)(\rho) = \rho.$
\end{enumerate}
We discuss both these properties in detail in section \ref{sec:conceptual-section}. In section \ref{sec:approximate-section}, we will show that $\tilde{Q}_{2}$ has the additional property that when $f_{\sigma, \Lambda}$ is \emph{small but nonzero}, its gradient is small as an operator in an appropriate sense. By making that statement precise, we will be able to derive inequality \eqref{eq:main-bound}.

Before proceeding to the plan of the paper, we pause to comment on the relationship between our inequality \eqref{eq:prelim-bound} and some existing inequalities governing approximate Petz recovery for sandwiched R\'{e}nyi relative entropies, derived in theorem 4.20 of \cite{gao2021recoverability}. The bound derived there for the second sandwiched R\'{e}nyi relative entropy is
\begin{equation} \label{eq:Marks-bound}
	\lVert \rho - (\mathcal{R}_{\sigma, \Lambda} \circ \Lambda)(\rho) \rVert_1
		\leq \frac{2}{\pi} (\Delta \tilde{Q}_{2})^{1/4 - \epsilon}
							\left( 4 (\lVert \rho^{1/2} \sigma^{-1} \rho^{1/2} \rVert_{\infty})^{1/2} + \left( \frac{\pi}{e \epsilon} \right)^{1/2} + 4 \right),
\end{equation}
where this inequality holds for all $\epsilon \in (0, 1/4).$ This bound is generally weaker than our bound \eqref{eq:prelim-bound}. For example, with $\sigma = I/d,$ $d=10$, $\Delta \tilde{Q}_{2} = 1/100$ and $\rho$ pure, the right-hand side of \eqref{eq:Marks-bound} can be numerically minimized over the allowed range of $\epsilon$ with a minimum value of $\sim\hspace{-0.1cm}5.28$. By contrast, the right-hand side of our inequality \eqref{eq:prelim-bound} in this case is $0.1.$

The plan of the paper is as follows.

In section \ref{sec:conceptual-section}, we review the essential results of \cite{cree2020geometric} as they apply to the second sandwiched R\'{e}nyi relative entropy, emphasizing what makes $\tilde{Q}_{2}$ special as compared to other quantum distance measures.\footnote{We note that by ``distance measure,'' we do not mean a metric in the sense of a positive semidefinite, symmetric function satisfying the triangle inequality; we mean a positive semidefinite function of two density operators that is monotonically decreasing under the application of the same quantum channel to both arguments.} We explain how, in our geometric framework, exact saturation of monotonicity for $\tilde{Q}_{2}$ implies perfect recovery for the Petz map.

In section \ref{sec:approximate-section}, we show that approximate saturation of monotonicity for $\tilde{Q}_2$ implies an inequality on the one-norm quality of the Petz map, namely inequality \eqref{eq:prelim-bound}. By computing this inequality for the case where $\sigma$ is the maximally mixed state, we prove inequality \eqref{eq:main-bound}.

In an appendix, we provide a numerical test of the tightness of inequalities \eqref{eq:Q-bound} and \eqref{eq:main-bound} in $d=2$ and $d=3$ by generating random states $\rho$ and random channels $\Lambda$, then computing and comparing both sides of the inequalities on the resulting space of samples.

\section{Petz recovery and the second sandwiched R\'{e}nyi relative entropy}
\label{sec:conceptual-section}

For a finite-dimensional Hilbert space $\H$, the space of Hermitian operators $\herm(\H)$ is a real vector space with the natural Hilbert-Schmidt inner product $\langle A, B \rangle = \tr(A^{\dagger} B).$ The subset of \emph{positive} operators $\pos(\H)$ is not a vector space, but it retains the manifold structure of $\herm(\H)$ with Riemannian metric given by the inner product. The tangent space at each point is isomorphic to $\herm(\H)$, since $\rho + \epsilon M$ for positive $\rho$ remains positive at small $\epsilon$ if and only if $M$ is Hermitian.

For a differentiable map $F$ between manifolds $\mathcal{M}$ and $\mathcal{N}$, the \emph{derivative} of $F$ at a point $p$ is defined as a map on the tangent spaces $\d F|_{p} : T_p \mathcal{M} \rightarrow T_{F(p)} \mathcal{N}$ that is compatible with the local behavior of $F$. For maps between manifolds with a local vector space structure, the definition is quite simple to state, and is analogous to the definition of the derivative in single-variable calculus:
\begin{equation} \label{eq:vector-space-derivative}
	\d F|_{p}(M) = \lim_{\epsilon \rightarrow 0} \frac{F(p + \epsilon M) - F(p)}{\epsilon}.
\end{equation}
If $F$ is a map from $\pos(\H)$ to $\mathbb{R}$, and $\rho$ is an operator in $\pos(\H)$, then $\d F|_{\rho}$ is a map from $\herm(\H)$ to the real numbers, i.e., an element of the dual space $\herm(\H)^*.$ Non-degeneracy of the Hilbert-Schmidt inner product implies the existence of a unique Hermitian operator $\Del F|_{\rho}$ satisfying
\begin{equation} \label{eq:gradient-definition}
	\langle \Del F|_{\rho}, M \rangle
		= \d F|_{\rho}(M)
\end{equation}
for all Hermitian operators $M$. The operator $\Del F|_{\rho}$ is called the \emph{gradient} of $F$ at the point $\rho$.

Now, let $\sigma$ be a strictly positive density operator on $\H$, let $\tilde{Q}_{2}$ be defined as in equation \eqref{eq:Q-2}, and let $f_{\sigma, \Lambda}$ be defined as in equation \eqref{eq:f-function}. It was shown in section 3 of \cite{cree2020geometric} that the gradient of $f_{\sigma, \Lambda}$ is given by
\begin{equation}
	\Del f_{\sigma, \Lambda}|_{\rho}
		= 2 \sigma^{-1/2} \rho \sigma^{-1/2}
			- 2 \Lambda^*[\Lambda(\sigma)^{-1/2} \Lambda(\rho) \Lambda(\sigma)^{-1/2}].
\end{equation}
In terms of the Petz map \eqref{eq:petz-map}, this is
\begin{equation} \label{eq:Q2-gradient}
	\Del f_{\sigma, \Lambda}|_{\rho}
		= 2 \sigma^{-1/2} \left[\rho - (\mathcal{R}_{\sigma, \Lambda} \circ \Lambda)(\rho)) \right] \sigma^{-1/2}.
\end{equation}
As explained in the introduction, nonnegativity of $f_{\sigma, \Lambda}$ on $\pos(\H)$ implies that if $f_{\sigma, \Lambda}(\rho)$ vanishes, then $\Del f_{\sigma, \Lambda}|_{\rho}$ vanishes as well. In other words, we have the implication
\begin{equation} \label{eq:petz-implication}
	\tilde{Q}_{2}(\rho \| \sigma) = \tilde{Q}_{2}(\Lambda(\rho) \| \Lambda(\sigma))
		\Rightarrow \rho - (\mathcal{R}_{\sigma, \Lambda} \circ \Lambda)(\rho) = 0,
\end{equation}
so saturation of monotonicity for $\tilde{D}_{2}$, which implies saturation of monotonicity for $\tilde{Q}_{2}$, implies that the $\sigma$-Petz map inverts $\Lambda$ on the state $\rho.$

There is one caveat to this discussion that will become important in the next section, which is that while we have assumed that $\rho$ lies in $\pos(\H)$, there are perfectly good quantum states described by positive \emph{semidefinite} operators that are not strictly positive. These operators lie on the boundary of the larger space $\psd(\H)$ containing $\pos(\H).$ This case is subtle because, as is the case in ordinary single-variable calculus, global minima of a function do not need to have vanishing derivatives at points on the boundary of the function's domain. The vanishing of the gradient is still assured, however, provided that we first project the gradient, considered as a vector on the tangent space at $\rho$, into the subspace of tangent directions that lie along the boundary of $\psd(\H)$. The details of this projection were discussed in section 4 of \cite{cree2020geometric}. 

In the special case of $\tilde{Q}_{2}$, however, this subtlety does not arise. This is because $\tilde{Q}_{2}$ is monotonic \emph{for all Hermitian operators $\rho$}, so long as $\sigma$ is a strictly positive density operator. The more general family of sandwiched R\'{e}nyi relative entropies are defined by
\begin{equation}
	\tilde{D}_{\alpha}(\rho \| \sigma) = \frac{1}{\alpha - 1} \log \tilde{Q}_{\alpha}(\rho \| \sigma) 
\end{equation}
with
\begin{equation}
	\tilde{Q}_{\alpha}(\rho \| \sigma) = \tr[(\sigma^{\frac{1-\alpha}{2\alpha}} \rho \sigma^{\frac{1-\alpha}{2\alpha}})^{\alpha}].
\end{equation}
For $\rho$ and $\sigma$ both positive, these quantities were shown to be monotonic under quantum channels in the range $\alpha \in [1, 2]$ in \cite{muller2013quantum, wilde2014strong}, for all $\alpha \geq 1$ in \cite{beigi2013sandwiched}, and most generally for all $\alpha \geq 1/2$ in \cite{frank2013monotonicity}. The restriction that $\rho$ be positive, however, is needed for the monotonicity proofs presented in \cite{muller2013quantum, wilde2014strong, beigi2013sandwiched} only when $\alpha$ is not an even integer, and then only because when $\alpha$ is not an even integer, $\tilde{D}_{\alpha}$ is not necessarily well defined for non-positive $\rho$. If $\rho$ is non-positive, then the operator $\sigma^{\frac{1-\alpha}{2\alpha}} \rho \sigma^{\frac{1-\alpha}{2\alpha}}$ may be non-positive, and so for non-integer values of $\alpha$, the $\alpha$-power may be multi-valued. Furthermore, if $\alpha$ is an \emph{odd} integer, then while $\tilde{Q}_{\alpha}$ is perfectly well defined for arbitrary Hermitian $\rho$, it may be negative, and so the logarithm appearing in $\tilde{D}_{\alpha}$ can be multi-valued. When $\alpha$ is an even integer and $\rho$ an arbitrary Hermitian matrix, however, monotonicity of $\tilde{D}_{2}$ follows immediately from the proof techniques of \cite{muller2013quantum, wilde2014strong}, and monotonicity of $\tilde{D}_{2k}$ for arbitrary integer $k \geq 1$ follows from the techniques of \cite{beigi2013sandwiched}. In fact, it was observed in Lemma 2 of \cite{wang2020alpha} that the quantities
\begin{equation}
	\hat{Q}_{\alpha}(\rho \| \sigma) = \tr\left[ \left|\sigma^{\frac{1-\alpha}{2\alpha}} \rho \sigma^{\frac{1-\alpha}{2\alpha}}\right|^{\alpha} \right],
\end{equation}
with $|X| \equiv \sqrt{X^{\dagger} X},$ satisfy a monotonicity inequality like \eqref{eq:monotonicity} for arbitrary positive $\sigma$, Hermitian $\rho,$ and $\alpha \geq 1$. For non-even-integer values of $\alpha$ this function is not smooth in $\rho$ when one of the eigenvalues of $\rho$ vanishes, and so is not suitable for analysis using the methods described in this section. For even integers $\alpha$, however, and in particular for $\alpha=2$, we have $\hat{Q}_{\alpha} = \tilde{Q}_{\alpha},$ and the corresponding sandwiched R\'{e}nyi relative entropy is monotonic for arbitrary Hermitian $\rho$ and positive $\sigma$. 

\section{Approximate Petz recovery from gradient operators}
\label{sec:approximate-section}

We would like to show that when $f_{\sigma, \Lambda}$ is close to zero as a number, the gradient $\Del f_{\sigma, \Lambda}$ must be close to zero as an operator. Via equation \eqref{eq:Q2-gradient}, this would imply that $\rho$ is close to $(\mathcal{R}_{\sigma, \Lambda} \circ \Lambda)(\rho)$ as an operator, which is the statement of approximate Petz recovery.

For general functions $F$, it is not true that proximity of $F$ to a minimum implies smallness of the gradient $\Del F$. It is true, however, for functions that have \emph{bounded second derivative}. It makes sense that $\tilde{Q}_{2}(\rho \|\sigma)$ would have this property, since it is quadratic in $\rho$ and its second derivative should therefore be constant. We will make this precise momentarily. First, however, we prove a version of our desired theorem for single-variable real functions that will be essential to proving an analogous statement for the operator function $f_{\sigma, \Lambda}.$

\begin{lemma} \label{lem:bike-lemma}
	Let $f : \mathbb{R} \rightarrow \mathbb{R}$ be a twice-differentiable function satisfying $f(x) \geq y_0$ and $f''(x) \leq b$ for all real $x$. Then for all points $x_0 \in \mathbb{R}$, we have
	\begin{equation}
		|f'(x_0)| \leq \sqrt{2 b (f(x_0) - y_0)}.
	\end{equation}
\end{lemma}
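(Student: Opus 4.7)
The plan is to combine a second-order Taylor expansion, which controls $f$ from above thanks to the hypothesis $f'' \leq b$, with the global lower bound $f \geq y_0$, to produce a quadratic inequality in the step size. The discriminant of that quadratic will then yield the desired bound on $|f'(x_0)|$ automatically.

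Concretely, fix $x_0$ and let $h \in \mathbb{R}$ be arbitrary. By Taylor's theorem with the Lagrange form of the remainder there is some $\xi$ between $x_0$ and $x_0+h$ such that
\begin{equation}
f(x_0+h) \;=\; f(x_0) + f'(x_0)\, h + \tfrac{1}{2} f''(\xi)\, h^2 \;\leq\; f(x_0) + f'(x_0)\, h + \tfrac{b}{2} h^2,
\end{equation}
where the inequality uses $f''(\xi) \leq b$. Combining with $f(x_0+h) \geq y_0$ and rearranging, we get
\begin{equation}
\tfrac{b}{2} h^2 + f'(x_0)\, h + \bigl(f(x_0) - y_0\bigr) \;\geq\; 0 \qquad \text{for every } h \in \mathbb{R}.
\end{equation}
When $b > 0$ this is a convex parabola in $h$, and the condition that it stay non-negative on the whole real line is precisely that its discriminant be non-positive, i.e.\ $f'(x_0)^2 \leq 2b\,(f(x_0) - y_0)$, which is the claim.

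The only thing to double-check is the degenerate case $b \leq 0$: if $b \leq 0$ then $f$ is concave, and a globally bounded-below concave function on $\mathbb{R}$ must be constant, so $f'(x_0) = 0$ and the inequality holds trivially (with the right-hand side interpreted as zero when $b = 0$). I do not anticipate any real obstacle here; the whole argument is essentially the one-line observation that a quadratic touching a given lower bound cannot have large slope unless it also has large curvature. The same discriminant trick, applied tangent-vector by tangent-vector on $\herm(\H)$, is presumably what drives the operator-valued version needed for inequality \eqref{eq:prelim-bound}, with the role of $b$ played by a uniform bound on the Hessian of the quadratic functional $\tilde{Q}_2(\,\cdot\,||\sigma)$.
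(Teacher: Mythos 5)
Your proof is correct and is essentially the same argument as the paper's: both compare $f$ to the quadratic majorant $f(x_0) + f'(x_0)h + \tfrac{b}{2}h^2$ (you via Taylor--Lagrange, the paper via a double integral of $b - f''$) and extract the bound from the requirement that this quadratic stay above $y_0$, which is the discriminant condition in your version and evaluation at the vertex in theirs. Your explicit treatment of the degenerate case $b \leq 0$ is a small bonus, since the paper's division by $b$ tacitly assumes $b > 0$.
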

\begin{proof}
	A charming way of thinking about this lemma was suggested to us by Kfir Dolev, which we present here to aid the reader's intuition: suppose Alice is riding a bike along a trail when she spots a fence some distance away from her. The variable $x$ represents time, the position of her bike is the function $f(x)$, and the position of the fence is the minimum value $y_0$ that she cannot exceed. The bound $f''(x) \leq b$ represents a maximum rate of deceleration --- the brakes on her bike can only change her speed at some constant rate. If her instantaneous velocity $f'(x_0)$ is too high at the moment she spots the fence, then even braking at full force will not be enough to bring her to a smooth stop and avoid a collision. The bound $|f'(x_0)| = \sqrt{2 b (f(x_0) - y_0)}$ represents a situation in which her initial speed is as high as possible while still allowing her to brake fully before she reaches the fence.
	
	Formally, we define the function $g(x)$ by
	\begin{equation}
		g(x) \equiv f(x_0) + f'(x_0) (x - x_0) + \frac{1}{2} b (x - x_0)^2.
	\end{equation}
	In the language of our biking analogy, this is the position curve followed by Alice if she brakes as quickly as possible. Because the zeroth and first derivative of $g(x)$ and $f(x)$ agree at $x_0$, and because the second derivative of $f(x)$ cannot exceed the second derivative of $g(x)$, we have
	\begin{align}
		g(x) - f(x)
			& = \int_{x_0}^{x} \d x' \int_{x_0}^{x'} \d x'' [g''(x'') - f''(x'')] \nonumber \\
			& = \int_{x_0}^{x} \d x' \int_{x_0}^{x'} \d x'' [b - f''(x'')] \nonumber \\
			& \geq 0.
	\end{align}
	The minimum of $g$ is achieved at $x_1 = x_0 - f'(x_0)/b$, and is given by $g(x_1) = [f(x_0) - f'(x_0)^2/2b].$ The inequality $g(x_1) \geq f(x_1)$, together with the minimum $f(x_1) \geq y_0,$ then gives the desired bound
	\begin{equation}
		|f'(x_0)| \leq \sqrt{2 b (f(x_0) - y_0)}.
	\end{equation}
	\qed
\end{proof}

To apply the underlying principle of lemma \ref{lem:bike-lemma} to the operator function $f_{\sigma, \Lambda}$, we must define what is meant by the second derivative of a function on a manifold of operators. In section \ref{sec:conceptual-section}, we defined the derivative of a function from a manifold $\mathcal{M}$ to a manifold $\mathcal{N}$ as a linear map of tangent spaces: $\d F|_{p} : T_p \mathcal{M} \rightarrow T_{F(p)} \mathcal{N}.$ The second derivative is a bilinear map on \emph{two} copies of the tangent space:\footnote{The symbol $\d^{(2)}$ appearing here is a second derivative, not the square of an exterior derivative; readers used to working with differential forms should not be tricked into thinking it satisfies $\d^{(2)} F = 0.$} $\d^{(2)} F|_p : T_p \mathcal{M} \times T_p \mathcal{M} \rightarrow T_{F(p)} \mathcal{N}.$ On manifolds with a local vector space structure, the second derivative can be defined simply as
\begin{equation} \label{eq:second-derivative-definition}
	\d^{(2)} F|_{p}(A, B)
		= \lim_{\epsilon, \eta \rightarrow 0}
			\frac{F(p+\epsilon A+\eta B) - F(p + \epsilon A) - F(p + \eta B) + F(p)}{\epsilon \eta}.
\end{equation}
If $F$ is a real-valued function, then $\d^{(2)} F$ is a rank-(0,2) tensor on the tangent space, and can be thought of equivalently as a linear map from $T_p \mathcal{M}$ to the dual space $T_p^* \mathcal{M}.$ We could proceed by defining a notion of boundedness of $\d^{(2)} F$ via the operator norm of that linear map, and proving a version of lemma \ref{lem:bike-lemma} in complete generality. For our purposes, though, this is overkill --- to bound the operator $\Del F|_{p}$, we do not need the second derivative of $F$ to be bounded as a matrix; we only need it to be bounded within the linear manifold $p + t \Del F|_{p}$ for real $t$. (Remember that since $p$ is a point in a manifold of operators, it is an operator, and so $p + t \Del F|_{p}$ is a linear manifold.)

\begin{lemma} \label{lem:operator-2-derivatives}
	Let $F$ be a twice-differentiable function from $\herm(\H)$ into $\mathbb{R}$, let $A$ be a Hermitian matrix, and define the gradient $\Del F|_{A}$ as in equation \eqref{eq:gradient-definition}. Suppose further that on the linear manifold $A + t \Del F|_{A}$, $F$ is lower bounded by $y_0$ and $\d^{(2)} F$ satisfies
	\begin{equation} \label{eq:second-derivative-gradient-bound}
		\d^{(2)} F|_{A + t \Del F|_{A}} \left( \Del F|_{A}, \Del F|_{A} \right)
			\leq b.
	\end{equation}
	Then the gradient of $F$ at $A$ satisfies
	\begin{equation}
		\tr[(\Del F|_{A})^2] \leq \sqrt{2 b (F(A) - y_0)}.
	\end{equation}
\end{lemma}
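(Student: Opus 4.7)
The plan is to reduce the operator statement to the single-variable Lemma \ref{lem:bike-lemma} by restricting $F$ to a straight line in $\herm(\H)$ passing through $A$ in the direction of $\Del F|_A$. Specifically, I define
\begin{equation}
    g(t) \equiv F(A + t\, \Del F|_A),
\end{equation}
which is a twice-differentiable function from $\mathbb{R}$ to $\mathbb{R}$. The hypotheses of Lemma \ref{lem:bike-lemma} are set up so as to apply directly to $g$: the lower bound $g(t) \geq y_0$ is exactly the stipulated lower bound on $F$ along the line $A + t\,\Del F|_A$, and the second-derivative bound $g''(t) \leq b$ is the content of inequality \eqref{eq:second-derivative-gradient-bound}, since on a vector-space manifold the chain rule applied to definition \eqref{eq:second-derivative-definition} gives $g''(t) = \d^2 F|_{A + t \Del F|_A}(\Del F|_A, \Del F|_A)$.

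Next I compute $g'(0)$ using definition \eqref{eq:vector-space-derivative}: by the chain rule, $g'(0) = \d F|_A (\Del F|_A)$, which by the defining property \eqref{eq:gradient-definition} of the gradient equals the Hilbert--Schmidt inner product $\langle \Del F|_A, \Del F|_A \rangle = \tr\bigl[(\Del F|_A)^2\bigr]$ (using that $\Del F|_A$ is Hermitian). Finally, applying Lemma \ref{lem:bike-lemma} to $g$ at the point $x_0 = 0$ yields
\begin{equation}
    \tr\bigl[(\Del F|_A)^2\bigr] = |g'(0)| \leq \sqrt{2b(g(0) - y_0)} = \sqrt{2b(F(A) - y_0)},
\end{equation}
which is the desired bound.

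There is no real obstacle here — the whole content of the lemma is the reduction to the one-dimensional case, and the choice of direction $\Del F|_A$ is precisely what makes the first derivative $g'(0)$ equal to the quantity $\tr[(\Del F|_A)^2]$ we want to control. The reason for restricting the second-derivative hypothesis to the single line $A + t\,\Del F|_A$, rather than requiring an operator-norm bound on $\d^2 F$ globally, is exactly so that this one-dimensional reduction is all that needs to be verified — a convenience which, as the authors remark, will be enough for the intended application to $f_{\sigma, \Lambda}$ in the next step of the argument.
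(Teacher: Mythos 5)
Your proof is correct and follows essentially the same route as the paper: restrict $F$ to the line $A + t\,\Del F|_A$, identify the first and second $t$-derivatives with $\d F(\Del F|_A)$ and $\d^2 F(\Del F|_A, \Del F|_A)$, and invoke Lemma \ref{lem:bike-lemma} at $t=0$ together with $\d F|_A(\Del F|_A) = \langle \Del F|_A, \Del F|_A\rangle = \tr[(\Del F|_A)^2]$. No gaps.
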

\begin{proof}
	On the line $A + t \Del F|_{A}$, the function $F$ can be thought of as a real function of $t$. The second derivative with respect to $t$ can be computed directly using ordinary single-variable calculus, and comparison with equation \eqref{eq:second-derivative-definition} gives the identity
	\begin{equation}
		\frac{\d^2}{\d t^2} F(A + t \Del F|_{A})
			= \d^{(2)} F|_{A + t \Del F|_{A}} \left( \Del F|_{A},
		\Del F|_{A} \right).
	\end{equation}
	Computing the first derivative gives
	\begin{equation}
		\frac{\d}{\d t} F(A + t \Del F)
			=  \d F|_{A + t \Del F}(\Del F|_{A}).
	\end{equation}
	So applying lemma \ref{lem:bike-lemma} at $t=0$ gives
	\begin{equation}
		\left |\d F|_{A}(\Del F|_A) \right| \leq \sqrt{2 b (F(A) - y_0)}.
	\end{equation}
	Rewriting this using equation \eqref{eq:gradient-definition} gives the desired inequality
	\begin{equation}
		\tr[(\Del F|_{A})^2] \leq \sqrt{2 b (F(A) - y_0)}.
	\end{equation}
	\qed
\end{proof}

We will now set our generic function $F$ from lemma \ref{lem:operator-2-derivatives} equal to the function $f_{\sigma, \Lambda}$ from equation \eqref{eq:f-function}. As emphasized in the final paragraph of section \ref{sec:conceptual-section}, $f_{\sigma, \Lambda}$ is lower-bounded by zero everywhere on the domain $\herm(H).$ Its second derivative can be computed directly using equation \eqref{eq:second-derivative-definition}, and is given by
\begin{equation}
	\d^{(2)} f_{\sigma, \Lambda}|_{A}(M, N)
		= 2 \tr \left( M \sigma^{-1/2} N \sigma^{-1/2} \right)
			- 2 \tr\left( \Lambda(M) \Lambda(\sigma)^{-1/2} \Lambda(N) \Lambda(\sigma)^{-1/2} \right).
\end{equation}
Note that the second derivative does \emph{not} depend on the base point $A$, which we expected from the fact that $f_{\sigma, \Lambda}$ is quadratic in $A$. In the special case $N=M$, we have
\begin{equation} \label{eq:double-M-f-intermediate}
	\d^{(2)} f_{\sigma, \Lambda}|_{A}(M, M)
		= 2 \tr \left( (\sigma^{-1/4} M \sigma^{-1/4})^2 \right)
		- 2 \tr\left[ (\Lambda(\sigma)^{-1/4} \Lambda(M) \Lambda(\sigma)^{-1/4})^2 \right],
\end{equation}
or, more simply, 
\begin{equation}
	\d^{(2)} f_{\sigma, \Lambda}|_{A}(M, M) = 2 f_{\sigma, \Lambda}(M).
\end{equation}
Because the second term of equation \eqref{eq:double-M-f-intermediate} is non-positive, we immediately obtain the upper bound
\begin{equation} \label{eq:double-M-f-penultimate}
	\d^{(2)} f_{\sigma, \Lambda}|_{A}(M, M)
		\leq 2 \left(\lVert \sigma^{-1/4} M \sigma^{-1/4} \rVert_2\right)^2.
\end{equation}
where we have introduced the \emph{two-norm} $\lVert A \rVert_{2} = \sqrt{\tr(A^{\dagger} A)}.$

The full set of Schatten $p$-norms, defined by
\begin{equation} \label{eq:schatten-norms}
	\lVert A \rVert_{p} = \tr(|A|^p)^{1/p},
\end{equation}
satisfy a family of H\"{o}lder inequalities. For $p, q, r \in [0, \infty]$ with $1/p + 1/q = 1/r$, we have
\begin{equation}
	\lVert A \rVert_r \leq \lVert A \rVert_p \lVert A \rVert_q.
\end{equation}
Applying H\"{o}lder inequalities successively to \eqref{eq:double-M-f-penultimate} yields the inequality
\begin{equation}
	\d^{(2)} f_{\sigma, \Lambda}|_{A}(M, M)
		\leq 2 (\lVert M \rVert_2)^2 (\lVert \sigma^{-1/4} \rVert_{\infty})^4.
\end{equation}
We have now proved all the inequalities needed to apply lemma \ref{lem:operator-2-derivatives} to the function $f_{\sigma, \Lambda},$ which results in the inequality
\begin{equation}
	(\lVert \Del f_{\sigma, \Lambda}|_A \rVert_2)^2
		\leq 2 \sqrt{f_{\sigma, \Lambda}(A)}  \times \lVert \Del f_{\sigma, \Lambda}|_{A} \rVert_2 \times (\lVert \sigma^{-1/4}\rVert_\infty)^2,
\end{equation}
which simplifies to
\begin{equation} \label{eq:pretty-good-inequality}
	\lVert \Del f_{\sigma, \Lambda}|_A \rVert_2
		\leq 2 \sqrt{f_{\sigma, \Lambda}(A)} (\lVert \sigma^{-1/4}\rVert_\infty)^2,
\end{equation}

To turn inequality \eqref{eq:pretty-good-inequality} into our claimed result \eqref{eq:main-bound}, we use equation \eqref{eq:Q2-gradient} for the gradient in terms of the Petz recovery channel, which gives the one-norm quality of Petz recovery as
\begin{equation}
	\lVert \rho - (\mathcal{R}_{\sigma, \Lambda} \circ \Lambda)(\rho) \rVert_1
		= \frac{1}{2} \lVert \sigma^{1/2} \Del f_{\sigma, \Lambda}|_{\rho} \sigma^{1/2} \rVert_1.
\end{equation}
Applying the H\"{o}lder inequalities to this expression gives
\begin{equation}
	\lVert \rho - (\mathcal{R}_{\sigma, \Lambda} \circ \Lambda)(\rho) \rVert_1
		\leq \frac{1}{2} (\lVert \sigma^{1/2} \rVert_4)^2 \lVert \Del f_{\sigma, \Lambda}|_{\rho} \rVert_2.
\end{equation}
We may now apply inequality \eqref{eq:pretty-good-inequality} to the last factor on the right-hand side to obtain the inequality
\begin{equation}
	\lVert \rho - (\mathcal{R}_{\sigma, \Lambda} \circ \Lambda)(\rho) \rVert_1
		\leq \sqrt{f_{\sigma, \Lambda}(\rho)} \times (\lVert \sigma^{1/2} \rVert_4)^2 \times (\lVert \sigma^{-1/4}\rVert_\infty)^2.
\end{equation}
As a final simplification, we may apply the identity $(\lVert \sigma^q \rVert_{p})^{1/q} = \lVert \sigma \rVert_{pq},$ which is straightforward to verify from the definition \eqref{eq:schatten-norms}. This gives $(\lVert \sigma^{1/2} \rVert_4)^2 = \lVert \sigma \rVert_2$ and $(\lVert \sigma^{-1/4}\rVert_\infty)^2 = \sqrt{\lVert \sigma^{-1} \rVert_\infty},$ from which we obtain the bound
\begin{equation} \label{eq:general-inequality}
	\lVert \rho - (\mathcal{R}_{\sigma, \Lambda} \circ \Lambda)(\rho) \rVert_1
		\leq \sqrt{f_{\sigma, \Lambda}(\rho)} \times \lVert \sigma \rVert_2 \times \sqrt{\lVert \sigma^{-1} \rVert_\infty}.
\end{equation}

Inequality \eqref{eq:general-inequality} is a general inequality governing the one-norm quality of the Petz map in terms of the second sandwiched R\'{e}nyi relative entropy. The function $f_{\sigma, \Lambda}(\rho)$ is what we called $\Delta \tilde{Q}_{2}$ in the introduction --- the amount that $\tilde{Q}_{2}$ changes under application of a quantum channel. When its square root is small compared to $1/(\lVert \sigma \rVert_2 \times \sqrt{\lVert \sigma^{-1} \rVert_\infty})$, inequality \eqref{eq:general-inequality} implies that the $\sigma$-Petz map does a good job inverting $\Lambda$ on $\rho$, as measured by the one norm. To make contact with inequality \eqref{eq:main-bound} from the introduction, we set the Hilbert space dimension to $d$ and let $\sigma$ be the maximally mixed state $\sigma = I/d.$ In this case, inequality \eqref{eq:general-inequality} reduces to
\begin{equation}
	\lVert \rho - (\mathcal{R}_{I/d, \Lambda} \circ \Lambda)(\rho) \rVert_1
		\leq \sqrt{\Delta \tilde{Q}_{2, I/d}}.
\end{equation}
To obtain a bound in terms of $\tilde{D}_{2}$ rather than $\tilde{Q}_{2}$, we use the equality
\begin{equation}
	\Delta \tilde{Q}_{2, I/d} = e^{\tilde{D}_{2}(\rho \| I/d)} (1 - e^{-\Delta \tilde{D}_{2, I/d}}),
\end{equation}
together with the maximal entropy bound\footnote{This follows from the fact that $\rho$ is positive semidefinite and unit-trace, which implies $\tr(\rho^2) \leq 1.$}
\begin{equation}
	\exp\left[ \tilde{D}_{2} (\rho \lVert I/d) \right]
		\leq d,
\end{equation}
which yields
\begin{equation}
	\lVert \rho - (\mathcal{R}_{I/d, \Lambda} \circ \Lambda)(\rho) \rVert_1
		\leq d^{1/2} \left(1 - e^{-\Delta \tilde{D}_{2, I/d}} \right)^{1/2}.
\end{equation}
This is exactly the bound we claimed in inequality \eqref{eq:main-bound}.

\begin{acknowledgements}
	We thank Mark Wilde for many insightful conversations about the properties of sandwiched R\'{e}nyi relative entropies, and for comments on an early version of this manuscript. We thank Kfir Dolev for providing intuition that helped improve the presentation of lemma \ref{lem:bike-lemma}.
\end{acknowledgements}

\appendix 

\section*{Appendix: Numerics}
\label{sec:numerics}

It would be interesting to understand whether inequalities \eqref{eq:main-bound} and \eqref{eq:Q-bound} are tight. That is, for fixed Hilbert space dimension $d$ and arbitrary allowed values of $\Delta \tilde{D}_{2,I/d}$ and $\Delta \tilde{Q}_{2, I/d}$, we would like to know if there exists a state $\rho$ and channel $\Lambda$ satisfying
\begin{equation}
	\lVert \rho - (\mathcal{R}_{I/d, \Lambda} \circ \Lambda)(\rho) \rVert_1
		= d^{1/2} \left(1 - e^{-\Delta \tilde{D}_{2, I/d}} \right)^{1/2}
\end{equation}
or 
\begin{equation}
	\lVert \rho - (\mathcal{R}_{I/d, \Lambda} \circ \Lambda)(\rho) \rVert_1
		= \sqrt{\Delta \tilde{Q}_{2, I/d}}.
\end{equation}
A detailed understanding of this question is beyond the scope of the present work. We present only a simple numerical test of the question in dimensions $d=2$ and $d=3.$ We generate states $\rho$ and channels $\Lambda$ at random, compute the left- and right-hand sides of \eqref{eq:Q-bound} and \eqref{eq:main-bound}, and compare them.

To generate a random state $\rho$, we start with a complex matrix $M$ whose entries have real and imaginary parts drawn independently from the unit-variance Gaussian distribution, and set $\rho = M^{\dagger} M / \tr(M^{\dagger} M).$ To generate random channels $\Lambda$, we make use of Kraus's theorem \cite{kraus1971general} that $\Lambda$ is a quantum channel if and only if it can be written as
\begin{equation}
	\Lambda(A)
		= \sum_j E_j A E_j^{\dagger}
\end{equation}
for some set of operators $\{E_j\}$ that has at most $d^2$ elements and satisfies $\sum_j E_j^{\dagger} E_j = I.$ Following a prescription from \cite{kukulski2021generating}, we pick an integer $n$ from $1$ to $d^2$ randomly, generate $n$ complex matrices $F_j$ by drawing the real and imaginary parts of the matrix entries independently from the unit-variance Gaussian distribution, and define
\begin{equation}
	E_j \equiv F_j \left( \sum_k F_k^{\dagger} F_k \right)^{-1/2}
\end{equation}
to guarantee $\sum_j E_j^{\dagger} E_j = I.$

For each of the cases $d=2$ and $d=3$, we generated 100,000 random states $\rho$ and channels $\Lambda$. The left-hand side of figure \ref{fig:2-plots} shows a scatter plot of $\lVert \rho - (\mathcal{R}_{I/d, \Lambda} \circ \Lambda)(\rho) \rVert_1$ against $\Delta \tilde{D}_{2,I/d}$ in $d=2$, imposed over the region allowed by our inequality \eqref{eq:main-bound}. The right-hand side of the same figure shows a scatter plot of $\lVert \rho - (\mathcal{R}_{I/d, \Lambda} \circ \Lambda)(\rho) \rVert_1$ against $\Delta \tilde{Q}_{2, I/d}$, imposed over the region allowed by our inequality \eqref{eq:Q-bound}. Figure \ref{fig:3-plots} shows analogous plots in $d=3.$ For $d=2$, both bounds appear quite tight, in the sense that there are samples all along the upper edge of the allowed region. For $d=3$, neither bound seems perfectly tight. Whether this is genuinely the case, or whether the failure of our numerical samples to saturate the bounds is the result of some concentration of measure effect on the particular ensemble we have sampled, would have to be established in future work using either analytical methods or a more careful numerical search.

\begin{figure}
	\centering
	\includegraphics[scale=0.4]{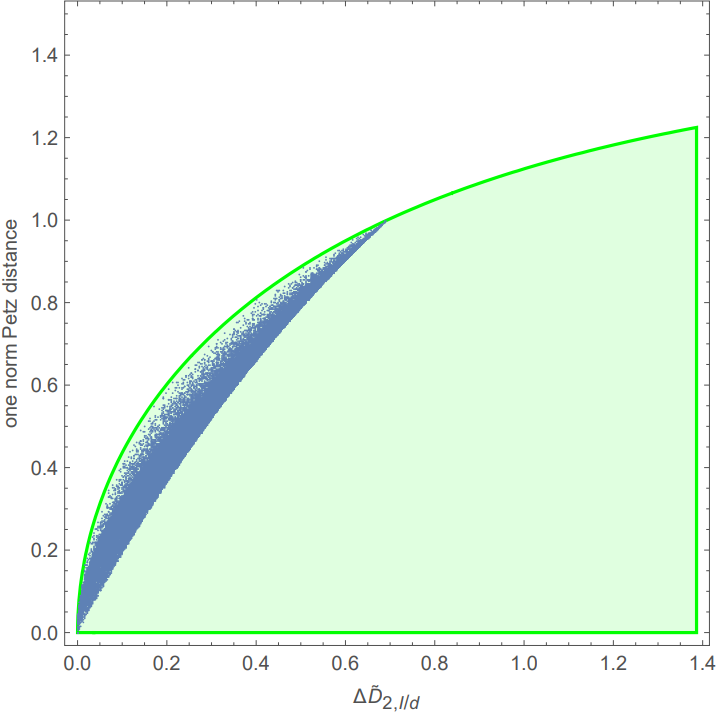}
	\hspace{0.5cm}
	\includegraphics[scale=0.4]{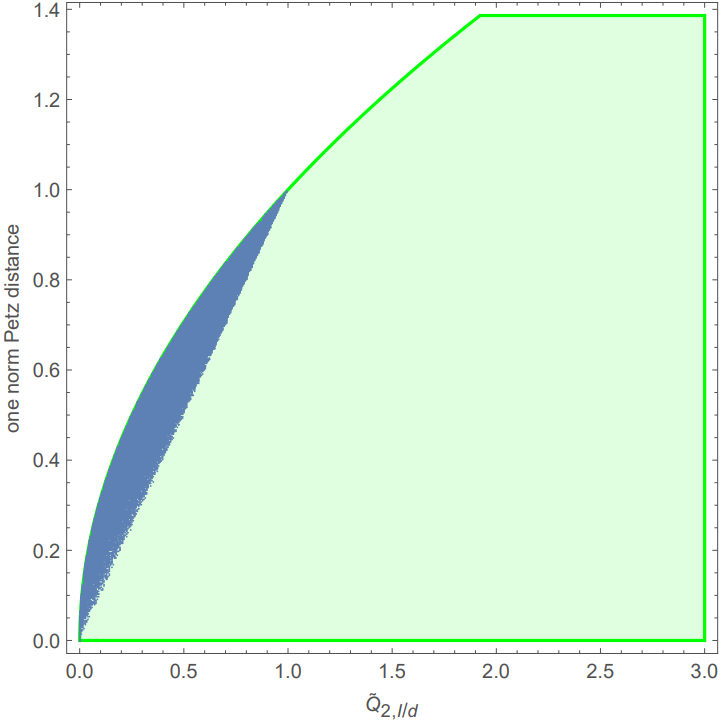}
	\caption{Left: The region in the $(\lVert\rho - (\mathcal{R}_{I/d, \Lambda} \circ \Lambda)(\rho)\rVert_1, \Delta \tilde{D}_{2, I/d})$ plane allowed by inequality \eqref{eq:main-bound} for $d=2$, together with a scatter plot of those quantities for 100,000 random states $\rho$ and channels $\Lambda$. Right: An analogous plot but in terms of $\tilde{Q}_{2,I/d}$ and inequality \eqref{eq:Q-bound}.}
	\label{fig:2-plots}
\end{figure}

\begin{figure}
	\centering
	\includegraphics[scale=0.4]{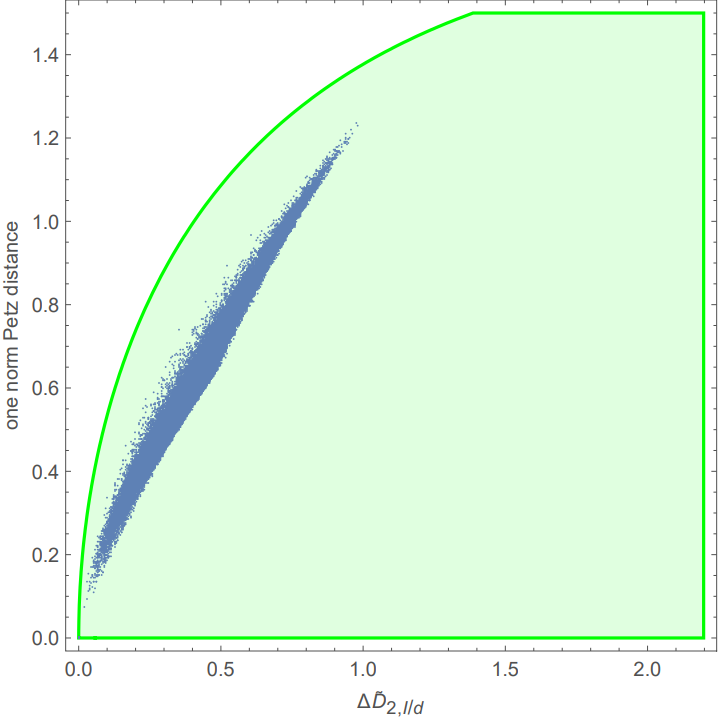}
	\hspace{0.5cm}
	\includegraphics[scale=0.4]{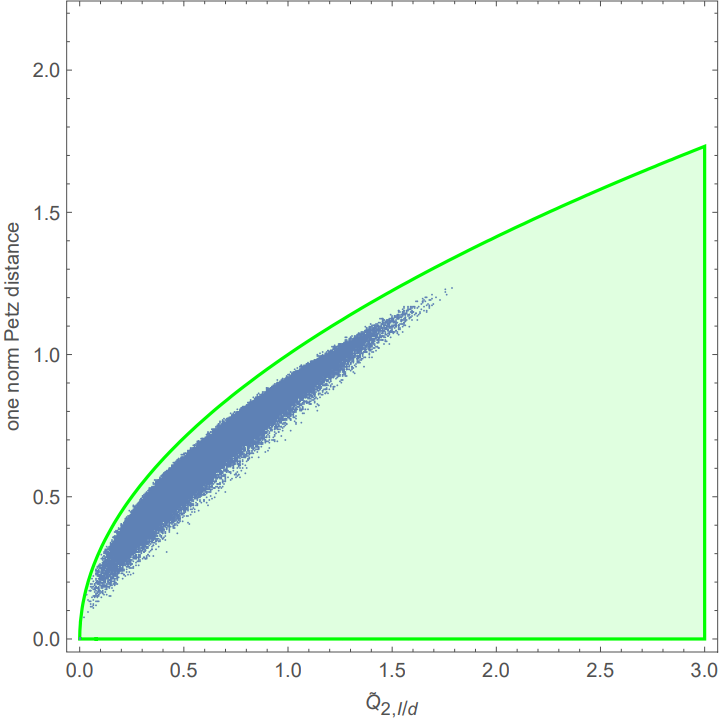}
	\caption{Left: The region in the $(\lVert\rho - (\mathcal{R}_{I/d, \Lambda} \circ \Lambda)(\rho)\rVert_1, \Delta \tilde{D}_{2, I/d})$ plane allowed by inequality \eqref{eq:main-bound} for $d=3$, together with a scatter plot of those quantities for 100,000 random states $\rho$ and channels $\Lambda$. Right: An analogous plot but in terms of $\tilde{Q}_{2,I/d}$ and inequality \eqref{eq:Q-bound}.}
	\label{fig:3-plots}
\end{figure}

\bibliographystyle{spphys}       
\bibliography{biblio.bib}   

\begin{thebibliography}{10}
\providecommand{\url}[1]{{#1}}
\providecommand{\urlprefix}{URL }
\expandafter\ifx\csname urlstyle\endcsname\relax
  \providecommand{\doi}[1]{DOI \discretionary{}{}{}#1}\else
  \providecommand{\doi}{DOI \discretionary{}{}{}\begingroup
  \urlstyle{rm}\Url}\fi

\bibitem{cree2020geometric}
S.~Cree, J.~Sorce, Journal of Physics A: Mathematical and Theoretical  (2022).
\newblock
  \urlprefix\url{http://iopscience.iop.org/article/10.1088/1751-8121/ac5648}

\bibitem{petz1986sufficient}
D.~Petz, Communications in mathematical physics \textbf{105}(1), 123 (1986)

\bibitem{petz1988sufficiency}
D.~Petz, The Quarterly Journal of Mathematics \textbf{39}(1), 97 (1988)

\bibitem{junge2018universal}
M.~Junge, R.~Renner, D.~Sutter, M.M. Wilde, A.~Winter, in \emph{Annales Henri
  Poincar{\'e}}, vol.~19 (Springer, 2018), vol.~19, pp. 2955--2978

\bibitem{chen2020entanglement}
C.F. Chen, G.~Penington, G.~Salton, Journal of High Energy Physics
  \textbf{2020}(1), 1 (2020)

\bibitem{barnum2002reversing}
H.~Barnum, E.~Knill, Journal of Mathematical Physics \textbf{43}(5), 2097
  (2002)

\bibitem{muller2013quantum}
M.~M{\"u}ller-Lennert, F.~Dupuis, O.~Szehr, S.~Fehr, M.~Tomamichel, Journal of
  Mathematical Physics \textbf{54}(12), 122203 (2013)

\bibitem{wilde2014strong}
M.M. Wilde, A.~Winter, D.~Yang, Communications in Mathematical Physics
  \textbf{331}(2), 593 (2014)

\bibitem{frank2013monotonicity}
R.L. Frank, E.H. Lieb, Journal of Mathematical Physics \textbf{54}(12), 122201
  (2013)

\bibitem{beigi2013sandwiched}
S.~Beigi, Journal of Mathematical Physics \textbf{54}(12), 122202 (2013)

\bibitem{leditzky2017data}
F.~Leditzky, C.~Rouz{\'e}, N.~Datta, Letters in Mathematical Physics
  \textbf{107}(1), 61 (2017)

\bibitem{wang2020alpha}
X.~Wang, M.M. Wilde, Physical Review A \textbf{102}(3), 032416 (2020)

\bibitem{hiai_quantum_2011}
F.~Hiai, M.~Mosonyi, D.~Petz, C.~Beny, Reviews in Mathematical Physics
  \textbf{23}(07) (2011)

\bibitem{gao2021recoverability}
L.~Gao, M.M. Wilde, Journal of Physics A: Mathematical and Theoretical  (2021)

\bibitem{kraus1971general}
K.~Kraus, Annals of Physics \textbf{64}(2), 311 (1971)

\bibitem{kukulski2021generating}
R.~Kukulski, I.~Nechita, {\L}.~Pawela, Z.~Pucha{\l}a, K.~{\.Z}yczkowski,
  Journal of Mathematical Physics \textbf{62}(6), 062201 (2021)

\end{thebibliography}

\end{document}